\documentclass[11pt,a4paper]{article}

\usepackage[T1]{fontenc}
\usepackage[utf8]{inputenc}
\usepackage{lmodern}
\usepackage[margin=1in]{geometry}

\usepackage{graphicx}%
\usepackage{amsmath,amssymb,amsfonts,mathtools}%
\usepackage{amsthm}%
\usepackage{mathrsfs}%
\usepackage{textcomp}%
\usepackage{microtype}%
\usepackage[authoryear,round]{natbib}
\usepackage[hidelinks]{hyperref}

\theoremstyle{plain}%
\newtheorem{theorem}{Theorem}%
\newtheorem{lemma}[theorem]{Lemma}%
\newtheorem{proposition}[theorem]{Proposition}%
\newtheorem{corollary}[theorem]{Corollary}%

\theoremstyle{remark}%
\newtheorem{remark}[theorem]{Remark}%

\newcommand{\R}{\mathbb R}
\newcommand{\cM}{\mathcal M}
\newcommand{\cF}{\mathcal F}
\newcommand{\dd}{\,d}
\newcommand{\KL}{D_{\mathrm{KL}}}
\newcommand{\one}{\mathbf 1}
\DeclareMathOperator*{\argmin}{arg\,min}

\begin{document}

\title{Numeraire Invariance of Entropy-Projected Martingale Measures}
\author{Jan Vecer\thanks{Charles University, Sokolovska 83, Prague 8, 18675, Czech Republic. Email: \texttt{vecer@karlin.mff.cuni.cz}.}}
\date{}

\maketitle

\begin{abstract}
Let \(P\) be a fixed physical law and let \(Q\) be an equivalent martingale measure selected from the martingale-measure set associated with a chosen numeraire. A change of numeraire maps \(Q\) to \(T_LQ\), where \(d(T_LQ)=L\,dQ\) and \(L\) is the terminal likelihood ratio. The forward relative-entropy projection minimizing \(D_{\mathrm{KL}}(P\Vert Q)\) commutes with this transform because its objective changes only by the constant \(-E_P\log L\). The
minimal entropy martingale measure
(MEMM) orientation \(D_{\mathrm{KL}}(Q\Vert P)\) does not have this property, and a trinomial counterexample shows that independently recomputed MEMMs need not be likelihood compatible. We make two economic consequences explicit. First, the two entropy orientations are precisely the \(Q\)-dependent terms in the classical convex-dual objectives for logarithmic and exponential utility, respectively. Second, likelihood compatibility is equivalent to equality of the pricing functionals obtained in the two numeraires. Hence the forward selectors value every integrable claim consistently across numeraires, whereas the two MEMMs in the counterexample assign different prices to a nonreplicable digital claim. We also prove a finite-state class-level characterization: uniform invariance over the elementary one-period likelihood-ratio families forces a smooth convex \(f\)-divergence to be logarithmic, up to scaling and affine equivalence. Finally, in finite-state markets, the forward projection exists under the usual strictly positive feasible-point condition; its density \(dP/dQ^*\) is attainable log-optimal terminal wealth, and the minimum forward entropy equals maximal expected log growth.
\end{abstract}

\noindent\textbf{Keywords:} relative entropy, minimal entropy martingale measure, change of numeraire, numeraire invariance, logarithmic utility, pricing consistency, incomplete markets

\noindent\textbf{JEL Classification:} C02, G11, G12, G13

\section{Introduction}\label{sec:introduction}

Equivalent martingale measures are generally not unique in incomplete markets. This non-uniqueness motivates selection rules, among them minimal-distance martingale measures and, in particular, the minimal entropy martingale measure (MEMM). The MEMM minimizes
\[
        \KL(Q\Vert P)=E_Q\!\left[\log\frac{\dd Q}{\dd P}\right]
\]
over equivalent martingale measures \(Q\), and is closely connected with exponential utility and utility indifference valuation; see, for example, \citet{Frittelli2000}, \citet{DelbaenEtAl2002}, \citet{GranditsRheinlaender2002}, and the survey by \citet{SchweizerMEMM}. More generally, \citet{GollRueschendorf2001} characterize minimal \(f\)-divergence martingale measures and relate them to utility maximization.

This paper studies the opposite orientation of the same relative entropy. Given a physical law \(P\), consider
\[
        Q^*\in\argmin_{Q\in\cM} \KL(P\Vert Q),
        \qquad
        \KL(P\Vert Q)=E_P\!\left[\log\frac{\dd P}{\dd Q}\right],
\]
where \(\cM\) is the martingale-measure set for a given numeraire. In the terminology of Goll and R\"uschendorf, this is the logarithmic case \(f(x)=-\log x\); in utility-duality language it is the dual martingale measure associated with logarithmic utility. The logarithmic construction is known. The present paper isolates a structural property of it---compatibility with likelihood-ratio changes of numeraire---and compares that property with the MEMM orientation.

The two orientations have parallel optimization origins. If \(Z_Q=dQ/dP\), their \(Q\)-dependent dual terms are
\[
 -E_P[\log Z_Q]=\KL(P\Vert Q)
 \quad\text{and}\quad
 E_P[Z_Q\log Z_Q]=\KL(Q\Vert P)
\]
for logarithmic and exponential utility, respectively. The distinction is therefore not between an optimization-based selector and an ad hoc one; it is between two utility duals with different behavior under changes of accounting unit. Section~\ref{subsec:utility_duality} gives the conjugate calculations explicitly.

Numeraire compatibility also has a direct pricing meaning. Let \(Q^N\) and \(Q^X\) be measures selected in economies denominated by two strictly positive numeraires \(N\) and \(X\). The standard pricing formulas in the two units agree for every integrable contingent claim if and only if \(Q^X\) is the likelihood transform of \(Q^N\). Thus two agents who share the same physical law and traded assets, but independently solve a selector in their respective numeraires, can obtain different prices for nonreplicable derivatives if their selected measures are not likelihood compatible. They continue to agree on traded and replicable claims. The forward projection avoids this additional source of disagreement; independently recomputed MEMMs need not.

The contribution is fivefold. First, Theorem~\ref{thm:forward_invariance} records the constant-shift identity in extended-real form, and Corollary~\ref{cor:entropy_log_return} rewrites it as an entropy decomposition of the expected log return of one numeraire relative to another. Second, Proposition~\ref{prop:pricing_consistency} shows that likelihood compatibility is equivalent to cross-numeraire pricing consistency, while Section~\ref{subsec:utility_duality} identifies the two entropy orientations through utility conjugacy. Third, Theorem~\ref{thm:f_divergence_uniqueness} gives a finite-state class-level characterization: among smooth \(f\)-divergence projections, the logarithmic integrand is the unique nontrivial integrand whose minimizers are invariant under the whole class of elementary one-period numeraire-compatible likelihood transforms. The class-level quantifier is essential; a single finite market tests only the finitely many values taken by its likelihood ratio. Fourth, Lemma~\ref{lem:finite_existence} and Theorem~\ref{thm:forward_log_wealth} separate finite-state existence from first-order conditions and establish the exact primal-dual identity between minimum forward entropy and maximum expected logarithmic growth. Fifth, Theorem~\ref{thm:reverse_counterexample} gives an explicit trinomial failure of MEMM invariance, and Corollary~\ref{cor:memm_price_disagreement} exhibits the resulting disagreement on the price of a nonreplicable digital claim.

Finite-state arguments are used only where existence and first-order conditions are asserted. The change-of-numeraire identity and the pricing-consistency equivalence are measure-theoretic. In continuous-time incomplete semimartingale models, however, the logarithmic dual optimizer need not be an equivalent martingale measure and may live in an enlarged deflator domain, as in the general utility-maximization theory of \citet{KramkovSchachermayer1999} and the numeraire-portfolio theory of \citet{Long1990}, \citet{Becherer2001}, and \citet{KaratzasKardaras2007}. Section~\ref{sec:scope} returns to this point.

\section{Entropy projections and numeraire transforms}\label{sec:entropy_transforms}

Let \((\Omega,\cF)\) be a measurable space and let \(P\) be a fixed physical probability measure. For probability measures \(\mu,\nu\), define
\[
        \KL(\mu\Vert \nu)
        :=
        \begin{cases}
        E_\mu\!\left[\log\dfrac{d\mu}{d\nu}\right], & \mu\ll\nu,\\[2mm]
        +\infty, & \text{otherwise.}
        \end{cases}
\]
We call \(\KL(P\Vert Q)\) the \emph{forward} orientation because the expectation is under the fixed physical law \(P\). We call \(\KL(Q\Vert P)\) the \emph{reverse} orientation. The labels are only mnemonic; all formulas display the order of the arguments explicitly. This avoids a terminology collision with parts of the martingale-measure literature, where \(\KL(P\Vert Q)\) is sometimes described as a reverse relative entropy because it corresponds to the divergence function \(f(x)=-\log x\).

Suppose that \(N\) and \(X\) are strictly positive traded numeraires over a finite horizon \([0,T]\). Under the usual change-of-numeraire hypotheses, if \(Q\) is an equivalent martingale measure in units of \(N\), then the corresponding measure in units of \(X\) is obtained by a fixed likelihood-ratio transform \citep{GemanElKarouiRochet1995}. We write \(T_L\) for this transform on measures: \(T_LQ\) denotes the probability measure obtained from \(Q\) by multiplication with the likelihood ratio \(L\). Equivalently,
\begin{equation}\label{eq:change_numeraire_density}
        \frac{d(T_LQ)}{dQ}
        =L,
        \qquad
        L:=\frac{X_T/N_T}{X_0/N_0}.
\end{equation}
Here \(L>0\) is fixed by the terminal price ratio and \(E_Q[L]=1\) for each admissible \(Q\). In a one-period finite-state model, \(T_LQ=LQ\) is simply multiplication of state probabilities by the gross return of the new numeraire measured in old-numeraire units. When the likelihood ratio is denoted by another symbol, for example \(R\), we use the corresponding notation \(T_R\).

The following theorem is the basic observation.

\begin{theorem}[Forward entropy projection commutes with numeraire change]\label{thm:forward_invariance}
Let \(\mathcal C\) be a family of probability measures on \((\Omega,\cF)\). Let \(L:\Omega\to(0,\infty)\) be measurable, suppose that \(E_Q[L]=1\) for every \(Q\in\mathcal C\), and define \(T_LQ\) by \(d(T_LQ)/dQ=L\). If \(E_P|\log L|<\infty\), then, for every \(Q\in\mathcal C\),
\begin{equation}\label{eq:forward_constant_shift}
        \KL(P\Vert T_LQ)
        =
        \KL(P\Vert Q)-E_P[\log L]
\end{equation}
in the extended-real sense, with the convention \(+\infty-c=+\infty\) for finite \(c\). Consequently,
\[
        T_L\left(\argmin_{Q\in\mathcal C}\KL(P\Vert Q)\right)
        =
        \argmin_{\widetilde Q\in T_L(\mathcal C)}\KL(P\Vert\widetilde Q),
\]
whenever the argmin sets are nonempty.
\end{theorem}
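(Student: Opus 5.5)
The plan is to prove the constant-shift identity \eqref{eq:forward_constant_shift} pointwise in \(Q\) and then read off the argmin statement, since adding a finite constant does not move minimizers. Because \(L>0\) everywhere, \(T_LQ\) and \(Q\) are mutually absolutely continuous, with \(dQ/d(T_LQ)=1/L\). Hence \(P\ll Q\) if and only if \(P\ll T_LQ\).

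I will split into two cases on absolute continuity.

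\emph{Case \(P\not\ll Q\).} Then \(P\not\ll T_LQ\) as well, so both sides of \eqref{eq:forward_constant_shift} equal \(+\infty\). This uses the convention \(+\infty-c=+\infty\) and the fact that \(c=E_P[\log L]\) is finite by the hypothesis \(E_P|\log L|<\infty\).

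\emph{Case \(P\ll Q\).} The chain rule for Radon--Nikodym derivatives gives, \(P\)-a.s.,
\[
\frac{dP}{d(T_LQ)}=\frac{dP}{dQ}\cdot\frac{dQ}{d(T_LQ)}=\frac{dP}{dQ}\,L^{-1},
\qquad
\log\frac{dP}{d(T_LQ)}=\log\frac{dP}{dQ}-\log L .
\]
It remains to integrate under \(P\). This is the only delicate step, because \(E_P[\log(dP/dQ)]\) may be \(+\infty\), and a sum of integrals needs care.

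To handle it, I first note that for \(P\ll\nu\) the integral \(E_P[\log(dP/d\nu)]\) is always well defined in \((-\infty,+\infty]\). Its negative part is integrable: by Jensen's inequality, or by the bound \(\log^- x\le 1/x-1\) on \(\{x<1\}\), we get \(E_P[\log^-(dP/d\nu)]\le E_\nu[\one]\le 1\). Since \(\log L\in L^1(P)\), I can then apply linearity for quasi-integrable functions: if \(f\) has integrable negative part and \(g\in L^1(P)\), then \(E_P[f-g]=E_P[f]-E_P[g]\) in \((-\infty,\infty]\). Splitting \(f=f^+-f^-\) and using monotone convergence on \(f^+\) gives this. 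Applying it yields \eqref{eq:forward_constant_shift}, including the case where both sides are \(+\infty\).

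For the argmin statement, I use that \(T_L\) is a bijection from \(\mathcal C\) onto \(T_L(\mathcal C)\), with inverse \(T_{1/L}\). By \eqref{eq:forward_constant_shift}, the function \(\widetilde Q\mapsto \KL(P\Vert\widetilde Q)\) on \(T_L(\mathcal C)\) equals \(Q\mapsto\KL(P\Vert Q)-c\) on \(\mathcal C\), with \(c\) a finite constant. Therefore \(Q^*\) minimizes over \(\mathcal C\) if and only if \(T_LQ^*\) minimizes over \(T_L(\mathcal C)\).

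One edge case needs attention: if the infimum is \(+\infty\), every element is a minimizer on both sides, and the bijection still matches the two sets. This gives set equality whenever the argmin sets are nonempty. The main obstacle is only the extended-real bookkeeping in the integration step; the rest is formal.
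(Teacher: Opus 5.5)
Your proof is correct and follows the paper's argument: the same case split on \(P\ll Q\) (using that \(Q\) and \(T_LQ\) share null sets), the same chain-rule identity \(dP/d(T_LQ)=(dP/dQ)/L\), and the same reading of the argmin statement as an additive constant shift. Your bound on \(E_P[\log^-(dP/dQ)]\) and the quasi-integrable linearity step make explicit what the paper compresses into ``understood in the extended-real sense''; your treatment of \(T_L\) as a bijection with inverse \(T_{1/L}\), including the infinite-infimum edge case, is a harmless addition.
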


\begin{proof}
Since \(L\) is strictly positive and finite, \(Q\) and \(T_LQ\) have the same null sets. Hence, if \(P\not\ll Q\), then also \(P\not\ll T_LQ\), and both sides of \eqref{eq:forward_constant_shift} equal \(+\infty\). If \(P\ll Q\), then
\[
        \frac{\dd P}{\dd(T_LQ)}=\frac{dP/dQ}{L}.
\]
Therefore
\[
        \KL(P\Vert T_LQ)
        =E_P\!\left[\log\frac{\dd P}{\dd Q}-\log L\right]
        =\KL(P\Vert Q)-E_P[\log L],
\]
where the equality is understood in the extended-real sense; the second term is finite by assumption and is independent of \(Q\). Thus the transform changes the objective by an additive constant and preserves minimizers.\end{proof}

\begin{corollary}[Numeraire invariance of the forward martingale-measure projection]\label{cor:emm_forward_invariance}
Let \(\cM^N\) and \(\cM^X\) be the equivalent martingale-measure sets associated with numeraires \(N\) and \(X\), and suppose that the change-of-numeraire map \(T_L\) in \eqref{eq:change_numeraire_density} is a bijection from \(\cM^N\) to \(\cM^X\), and assume \(E_P|\log L|<\infty\). Then
\[
        Q_N^*\in\argmin_{Q\in\cM^N}\KL(P\Vert Q)
        \quad\Longleftrightarrow\quad
        T_LQ_N^*\in\argmin_{\widetilde Q\in\cM^X}\KL(P\Vert\widetilde Q).
\]
\end{corollary}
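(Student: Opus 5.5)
The corollary follows directly from Theorem~\ref{thm:forward_invariance} with \(\mathcal C=\cM^N\). The plan has three steps.

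First, check the hypotheses of the theorem. Here \(L\) is defined by \eqref{eq:change_numeraire_density}. It is measurable and strictly positive because \(N\) and \(X\) are strictly positive. The standing change-of-numeraire hypotheses give \(E_Q[L]=1\) for every \(Q\in\cM^N\). The integrability \(E_P|\log L|<\infty\) is assumed.

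Second, use the bijectivity of \(T_L:\cM^N\to\cM^X\), which gives \(T_L(\cM^N)=\cM^X\). The theorem's identity \eqref{eq:forward_constant_shift} then says that, for every \(Q\in\cM^N\),
\[
\KL(P\Vert T_LQ)=\KL(P\Vert Q)-c,
\qquad c:=E_P[\log L]\in\R,
\]
in the extended-real sense. I would argue pointwise rather than quote the argmin-set equality from the theorem. That equality is stated only when the argmin sets are nonempty, whereas the corollary is an equivalence for each fixed \(Q_N^*\).

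Third, prove the two implications.

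For \((\Rightarrow)\), suppose \(Q_N^*\) minimizes over \(\cM^N\). Take any \(\widetilde Q\in\cM^X\). By surjectivity, \(\widetilde Q=T_LQ\) for some \(Q\in\cM^N\). Then
\[
\KL(P\Vert\widetilde Q)=\KL(P\Vert Q)-c\ \ge\ \KL(P\Vert Q_N^*)-c=\KL(P\Vert T_LQ_N^*).
\]

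For \((\Leftarrow)\), suppose \(T_LQ_N^*\) minimizes over \(\cM^X\). For any \(Q\in\cM^N\), the measure \(T_LQ\) lies in \(\cM^X\), so
\[
\KL(P\Vert Q)=\KL(P\Vert T_LQ)+c\ \ge\ \KL(P\Vert T_LQ_N^*)+c=\KL(P\Vert Q_N^*).
\]

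The only delicate point is the extended-real arithmetic when some divergences equal \(+\infty\). Since \(c\) is finite, the map \(x\mapsto x-c\) is a strictly increasing bijection of \((-\infty,+\infty]\) that fixes \(+\infty\), so it preserves all inequalities. If every value in the infimum is \(+\infty\), then every element is a minimizer on both sides, and the equivalence still holds. Injectivity of \(T_L\) is not needed beyond the set equality \(T_L(\cM^N)=\cM^X\).
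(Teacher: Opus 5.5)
Your proposal is correct and takes the paper's route: the paper gives no separate proof and treats the corollary as an immediate consequence of Theorem~\ref{thm:forward_invariance} applied with $\mathcal C=\cM^N$ and $T_L(\cM^N)=\cM^X$. Arguing pointwise from the constant-shift identity \eqref{eq:forward_constant_shift}, as you do, is a harmless tidying that avoids the theorem's nonemptiness caveat; it also avoids any appeal to injectivity of $T_L$, since you take $Q_N^*\in\cM^N$ as given, and injectivity is automatic anyway because $L>0$.
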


\begin{corollary}[Expected log return as an entropy difference]\label{cor:entropy_log_return}
Let \(Q^N\in\cM^N\) and let \(Q^X=T_LQ^N\), with \(L\) as in \eqref{eq:change_numeraire_density}. If \(E_P|\log L|<\infty\) and one, hence both, of the divergences below is finite, then
\begin{equation}\label{eq:entropy_log_return}
 E_P\!\left[\log\frac{X_T/N_T}{X_0/N_0}\right]
 =\KL(P\Vert Q^N)-\KL(P\Vert Q^X).
\end{equation}
\end{corollary}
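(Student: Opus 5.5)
The plan is to read Corollary~\ref{cor:entropy_log_return} off the constant-shift identity \eqref{eq:forward_constant_shift} of Theorem~\ref{thm:forward_invariance}. We apply that theorem with the single-element family \(\mathcal C=\{Q^N\}\) and with \(L\) as in \eqref{eq:change_numeraire_density}. Its hypotheses have to be checked first.

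\begin{itemize}
\item \(L\) is measurable and takes values in \((0,\infty)\), since \(N\) and \(X\) are strictly positive.
\item \(E_{Q^N}[L]=1\) holds because \(Q^N\in\cM^N\) and \(X/N\) is a \(Q^N\)-martingale. This is exactly the normalization recorded after \eqref{eq:change_numeraire_density}, which also makes \(T_LQ^N\) a probability measure.
\item \(E_P|\log L|<\infty\) is assumed in the statement.
\end{itemize}

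Theorem~\ref{thm:forward_invariance} then gives, in the extended-real sense,
\[
\KL(P\Vert Q^X)=\KL(P\Vert Q^N)-E_P[\log L].
\]

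Next we justify the phrase ``one, hence both'' in the statement. Since \(E_P[\log L]\) is a finite real number \(c\), the identity shows that \(\KL(P\Vert Q^X)=+\infty\) exactly when \(\KL(P\Vert Q^N)=+\infty\). Here we use the convention \(+\infty-c=+\infty\), together with the fact from the proof of the theorem that \(Q^N\) and \(Q^X\) have the same null sets. So if one divergence is finite, so is the other.

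With both divergences finite, we may subtract legitimately and rearrange to obtain
\[
E_P[\log L]=\KL(P\Vert Q^N)-\KL(P\Vert Q^X).
\]
Substituting \(L=(X_T/N_T)/(X_0/N_0)\) turns this into \eqref{eq:entropy_log_return}.

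The only point needing care is that the divergences are \(\ge 0\) but could a priori be \(+\infty\). In that case the difference in \eqref{eq:entropy_log_return} would be an undefined \(\infty-\infty\). The finiteness hypothesis, propagated through the constant shift as above, removes this issue.
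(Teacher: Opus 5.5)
Your proposal is correct and follows the paper's proof, which is simply a rearrangement of the constant-shift identity \eqref{eq:forward_constant_shift}. Your checks of the hypotheses of Theorem~\ref{thm:forward_invariance} with \(\mathcal C=\{Q^N\}\), and of the ``one, hence both'' finiteness claim before subtracting, spell out what the paper leaves implicit.
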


\begin{proof}
This is a rearrangement of \eqref{eq:forward_constant_shift}. The left-hand side is the expected logarithmic return of the \(X\)-numeraire relative to the \(N\)-numeraire under the fixed physical law.
\end{proof}

\begin{proposition}[Likelihood compatibility and cross-numeraire pricing]\label{prop:pricing_consistency}
Let \(Q^N\in\cM^N\) and \(Q^X\in\cM^X\). For a bounded \(\cF\)-measurable random variable \(G\), consider the terminal claim \(H_T=N_TG\) and the two pricing functionals
\begin{equation}\label{eq:two_pricing_functionals}
 \pi_N(H):=N_0E_{Q^N}\!\left[\frac{H_T}{N_T}\right],
 \qquad
 \pi_X(H):=X_0E_{Q^X}\!\left[\frac{H_T}{X_T}\right].
\end{equation}
Then
\[
 \pi_N(N_TG)=\pi_X(N_TG)\quad\text{for every bounded }G
 \quad\Longleftrightarrow\quad
 Q^X=T_LQ^N.
\]
Consequently, likelihood compatibility is exactly the condition under which the two numeraires induce the same linear valuation on contingent claims.
\end{proposition}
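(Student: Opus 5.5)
The plan is to compute \(\pi_X(N_TG)\) explicitly in terms of \(L\) and then use the fact that bounded \(G\) separate probability measures. For \(H_T=N_TG\) we have
\[
 \pi_N(N_TG)=N_0E_{Q^N}[G],
 \qquad
 \pi_X(N_TG)=X_0E_{Q^X}\!\left[\frac{N_T}{X_T}G\right]=N_0E_{Q^X}\!\left[\frac{G}{L}\right],
\]
since \(L=(X_T/N_T)/(X_0/N_0)\), so \(X_0N_T/X_T=N_0/L\). Here \(N_0,X_0\) are treated as deterministic positive constants (time-zero prices), and \(1/L\) is strictly positive and finite, so \(G/L\) is measurable. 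Its \(Q^X\)-integrability is part of what must be checked.

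\emph{Sufficiency.} If \(Q^X=T_LQ^N\), then \(E_{Q^X}[G/L]=E_{Q^N}[L\cdot G/L]=E_{Q^N}[G]\). This holds for every bounded \(G\) and is finite, so the two prices agree.

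\emph{Necessity.} Suppose the two prices agree for every bounded \(G\). First I will show that \(1/L\) is \(Q^X\)-integrable. Take \(G=\one_{\{L\ge 1/n\}}\) and apply monotone convergence (or take \(G=1\) directly where integrability is known) to get \(E_{Q^X}[1/L]=1\). Next, define the measure \(\nu(A):=E_{Q^X}[\one_A/L]\). Taking \(G=\one_A\) gives \(\nu(A)=Q^N(A)\) for all \(A\in\cF\), so \(dQ^N/dQ^X=1/L\). Since \(L\in(0,\infty)\), the two measures are mutually absolutely continuous, and inverting the density gives \(dQ^X/dQ^N=L\), that is, \(Q^X=T_LQ^N\). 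Truncation handles any worry that \(1/L\) is unbounded: on the sets \(\{L\ge 1/n\}\) the integrand \(\one_A/L\) is bounded, and monotone convergence then gives equality on all of \(A\).

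\emph{Main obstacle.} The only delicate point is integrability. The definition of \(\pi_X\) implicitly requires \(E_{Q^X}[N_T|G|/X_T]<\infty\). I will read the statement as asserting equality of finite values whenever these prices are defined, and use nonnegative indicators so that all expectations lie in \([0,\infty]\) and monotone convergence applies without further hypotheses. The final sentence of the proposition is then a restatement: since every claim in the \(N\)-unit space has the form \(N_TG\), agreement of the two linear functionals is exactly the condition \(Q^X=T_LQ^N\).
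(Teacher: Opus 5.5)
Your proof is correct and follows the paper's argument: you rewrite \(\pi_X(N_TG)=N_0E_{Q^X}[G/L]\), obtain sufficiency by substituting \(dQ^X=L\,dQ^N\), and obtain necessity by testing indicators to identify \(dQ^N/dQ^X=1/L\) and then inverting. Your truncation step establishing \(E_{Q^X}[1/L]=1\) is extra care that the paper leaves implicit; the same fact also follows directly from \(N/X\) being a \(Q^X\)-martingale, since \(N\) is traded. It does not change the route of the proof.
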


\begin{proof}
Since
\[
 \frac{N_T}{X_T}=\frac{N_0}{X_0}\frac1L,
\]
we have
\[
 \pi_X(N_TG)=N_0E_{Q^X}\!\left[\frac{G}{L}\right].
\]
If \(Q^X=T_LQ^N\), this equals \(N_0E_{Q^N}[G]=\pi_N(N_TG)\). Conversely, equality for every bounded \(G\) implies
\[
 E_{Q^N}[G]=E_{Q^X}\!\left[\frac{G}{L}\right]
 \quad\text{for every bounded }G,
\]
so \(dQ^N=L^{-1}dQ^X\), equivalently \(dQ^X=L\,dQ^N\).
\end{proof}

Combining Proposition~\ref{prop:pricing_consistency} with Corollary~\ref{cor:emm_forward_invariance}, a forward minimizer and its transformed counterpart define one pricing functional expressed in two different units: for every claim with integrable discounted payoffs,
\begin{equation}\label{eq:forward_price_consistency}
 N_0E_{Q_N^*}\!\left[\frac{H_T}{N_T}\right]
 =X_0E_{T_LQ_N^*}\!\left[\frac{H_T}{X_T}\right].
\end{equation}
On the convex set of measures equivalent to \(P\), the forward objective is strictly convex whenever it is finite, so an existing minimizer is unique; independently solved forward problems therefore recover this compatible pair. If two independently selected measures are not likelihood compatible, the converse part of the proposition supplies an event \(A\) for which the digital claim \(N_T\one_A\) receives different prices. Because both measures are martingale measures for the same traded assets, such disagreement can occur only on nonreplicable claims.

\subsection{Utility-dual origin of the two entropy orientations}\label{subsec:utility_duality}

The opposite entropy orientations arise from two standard utility conjugates. Let
\[
 Z_Q:=\frac{\dd Q}{\dd P},
 \qquad
 V(y):=\sup_x\{U(x)-xy\},\quad y>0.
\]
At the level of the classical martingale-measure dual problem, without random endowment, the \(Q\)-dependent term is \(E_P[V(yZ_Q)]\). For logarithmic utility \(U_{\log}(x)=\log x\), \(x>0\),
\[
 V_{\log}(y)=-\log y-1,
\]
and therefore
\begin{equation}\label{eq:log_conjugate_entropy}
 E_P[V_{\log}(yZ_Q)]
 =-\log y-1-E_P[\log Z_Q]
 =-\log y-1+\KL(P\Vert Q).
\end{equation}
Thus minimization over \(Q\) selects the forward entropy projection. For exponential utility \(U_{\exp}(x)=-e^{-\gamma x}\), \(x\in\R\), \(\gamma>0\),
\[
 V_{\exp}(y)=\frac{y}{\gamma}\left(\log\frac{y}{\gamma}-1\right),
\]
and, using \(E_P[Z_Q]=1\),
\begin{equation}\label{eq:exp_conjugate_entropy}
 E_P[V_{\exp}(yZ_Q)]
 =\frac{y}{\gamma}\left(
       \KL(Q\Vert P)+\log\frac{y}{\gamma}-1
   \right).
\end{equation}
The \(Q\)-minimizer is therefore the MEMM. Equations~\eqref{eq:log_conjugate_entropy}--\eqref{eq:exp_conjugate_entropy} give the precise sense in which the forward projection is tied to logarithmic utility while the MEMM is tied to exponential utility; see \citet{Frittelli2000}, \citet{GollRueschendorf2001}, and \citet{GollKallsen2000,GollKallsen2003}.

The same distinction appears on the primal side. For an actual wealth process \(V\), define the normalized terminal gross wealths
\[
 W_T^N:=\frac{V_T/N_T}{V_0/N_0},
 \qquad
 W_T^X:=\frac{V_T/X_T}{V_0/X_0}.
\]
Then \(W_T^X=W_T^N/L\), and hence
\begin{equation}\label{eq:log_utility_numeraire_shift}
 E_P[\log W_T^X]
 =E_P[\log W_T^N]-E_P[\log L],
\end{equation}
so the maximizing portfolio is unchanged by the representation change. Exponential utility does not transform by an additive portfolio-independent constant: in \(X\)-units its criterion contains \(-\exp(-\gamma W_T^N/L)\). This is consistent with, rather than a defect of, the MEMM's role in exponential-utility valuation.

\begin{remark}[Accounting-unit dependence and numeraire invariance]
The physical law \(P\) is held fixed throughout. If a selector is intended to provide a canonical state-price completion of an incomplete market, covariance under a pure change of numeraire is a natural representation requirement, and Proposition~\ref{prop:pricing_consistency} shows that it has observable pricing content. If instead the measure is introduced as the dual optimizer for exponential utility specified in a fixed currency or asset unit, dependence on that unit is part of the preference specification. The two criteria answer different economic questions.
\end{remark}

\section{A uniqueness characterization among smooth \texorpdfstring{\(f\)}{f}-divergences}\label{sec:f_uniqueness}

The constant-shift identity in Theorem~\ref{thm:forward_invariance} is not merely a formal algebraic identity. Within the usual finite-state \(f\)-divergence class it characterizes the logarithmic integrand, up to the affine changes of integrand that do not change an \(f\)-divergence minimization problem.

Let \(f:(0,\infty)\to\R\) be convex and continuously differentiable. For a measure \(Q\ll P\), write
\[
        z_i:=\frac{Q(\{\omega_i\})}{p_i}
        \quad\text{in finite state models,}
        \qquad
        I_f(Q\Vert P):=E_P\left[f\left(\frac{\dd Q}{\dd P}\right)\right].
\]
Thus \(z_i\) denotes the \(P\)-density of \(Q\), not the mass of \(Q\) at \(\omega_i\). The forward entropy \(\KL(P\Vert Q)\) is the case \(f(x)=-\log x\). More generally, replacing \(f\) by
\[
        \widetilde f(x)=af(x)+b(x-1)+c,
        \qquad a>0,
\]
multiplies the objective by a positive constant and adds constants on every probability family, so it leaves all argmin sets unchanged. The following theorem is stated modulo this standard affine equivalence.

\begin{theorem}[Class-level characterization of the logarithmic \(f\)-divergence]\label{thm:f_divergence_uniqueness}
Let \(f\in C^1(0,\infty)\) be convex. Suppose that the following pulled-back numeraire-invariance property holds for every finite one-period likelihood-ratio family: for every finite physical law \(P=(p_i)\) with \(p_i>0\), every strictly positive likelihood ratio \(L=(L_i)\) with \(E_P[L]=1\), and the elementary numeraire-compatible family
\[
        \mathcal C_L
        :=
        \left\{Q:\ z_i>0,\ \sum_i p_iz_i=1,\ \sum_i p_iL_iz_i=1\right\},
        \qquad z_i=\frac{Q(\{\omega_i\})}{p_i},
\]
the point \(z_i\equiv1\), which minimizes \(I_f(\cdot\Vert P)\) over \(\mathcal C_L\), also minimizes the pulled-back transformed objective
\[
        Q\longmapsto I_f(T_LQ\Vert P)=\sum_i p_i f(L_i z_i)
\]
over \(\mathcal C_L\). Then
\[
        f(x)=\alpha\log x+\beta x+\gamma,
        \qquad x>0,
\]
for constants \(\alpha,\beta,\gamma\), with \(\alpha\le0\) by convexity. If the divergence is non-affine and convex, then \(\alpha<0\); up to affine equivalence and positive scaling,
\[
        f(x)=-\log x.
\]
Conversely, \(f(x)=-c\log x+\beta x+\gamma\), \(c>0\), has the invariance property. On every numeraire-compatible family satisfying \(E_P[\dd Q/\dd P]=E_P[L\,\dd Q/\dd P]=1\), it has the stronger constant-shift property
\[
        I_f(T_LQ\Vert P)=I_f(Q\Vert P)-cE_P[\log L].
\]
\end{theorem}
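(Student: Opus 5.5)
The plan is to turn the invariance hypothesis into a first-order condition at the interior point \(z\equiv1\), and then choose the free data \(P\) and \(L\) so that this condition becomes a functional equation for \(f'\).

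\textbf{Setup.} Note first that \(z\equiv1\) lies in \(\mathcal C_L\), because \(\sum_ip_i=1\) and \(\sum_ip_iL_i=1\). The set \(\mathcal C_L\) is the intersection of an affine subspace with the open positive orthant. Its tangent directions are the vectors \(h\) with \(\sum_ip_ih_i=0\) and \(\sum_ip_iL_ih_i=0\). The function \(\Phi(z)=\sum_ip_if(L_iz_i)\) is \(C^1\), and by hypothesis \(z\equiv1\) minimizes it over this relatively open convex set. The Lagrange condition therefore gives constants \(\lambda,\mu\) with
\[
 L_if'(L_i)=\lambda+\mu L_i\quad\text{for every }i.
\]
So with \(g(x):=xf'(x)\), the points \((L_i,g(L_i))\) must lie on a single line for every admissible family.

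\textbf{Key step.} Take three states. Fix any distinct \(x,y,w>0\) such that \(1\) lies strictly inside the convex hull of \(\{x,y,w\}\); for instance, take \(x<1<y\) and \(w\) arbitrary. Then there are strictly positive \(p_i\) with \(\sum p_i=1\) and \(\sum p_iL_i=1\) for \(L=(x,y,w)\): take a positive convex combination of \(x\) and \(y\) equal to \(1\), then perturb slightly by giving \(w\) a small weight and rebalancing. Collinearity of the three points means \(g\) is affine on \(\{x,y,w\}\). Now fix one pair \(x_0<1<y_0\) and let \(w\) range over \((0,\infty)\). Every value \(g(w)\) lies on the line through \((x_0,g(x_0))\) and \((y_0,g(y_0))\), so \(g(x)=\lambda+\mu x\) globally. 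Hence \(f'(x)=\lambda/x+\mu\), and integrating gives \(f=\alpha\log x+\beta x+\gamma\) with \(\alpha=\lambda\) and \(\beta=\mu\). Convexity requires \(f''=-\alpha/x^2\ge0\), so \(\alpha\le0\). If \(\alpha=0\), then \(f\) is affine and the divergence is trivial. Otherwise \(\alpha<0\), and the affine equivalence \(\widetilde f=af+b(x-1)+c\) removes \(\beta x+\gamma\) and rescales, giving \(f=-\log x\).

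\textbf{Converse.} This is a direct computation. On a family with \(E_P[z]=E_P[Lz]=1\),
\[
\sum p_i\bigl(-c\log(L_iz_i)+\beta L_iz_i+\gamma\bigr)=\sum p_i\bigl(-c\log z_i+\beta z_i+\gamma\bigr)-cE_P[\log L],
\]
because \(\sum p_iL_iz_i=\sum p_iz_i=1\). This is the stated constant shift, in line with Theorem~\ref{thm:forward_invariance}. The objective \(z\mapsto\sum p_if(z_i)\) is convex and its gradient \((-c+\beta)p_i\) is a combination of the constraint normals, so \(z\equiv1\) minimizes both objectives.

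\textbf{Main obstacle.} The delicate point is the existence step: showing that every triple with \(1\) strictly inside its range admits a strictly positive \(P\) satisfying both linear constraints. Using it, one must also cover every \(w\), including \(w\) on either side of \(1\) and \(w=1\). For \(w\) equal to \(x_0\) or \(y_0\), choose a different reference pair. Only \(C^1\) regularity is available, which is why the argument uses the first-order condition rather than differentiating twice.
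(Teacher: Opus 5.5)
Your proof is correct and follows essentially the same route as the paper. The first-order condition at \(z\equiv1\) on the relatively open set \(\mathcal C_L\) forces the points \((L_i,L_if'(L_i))\) onto one line, this is upgraded to affineness of \(g(x)=xf'(x)\) on \((0,\infty)\), and \(f\) is then obtained by integration. The only difference is cosmetic: you use three-state families with a fixed reference pair \(x_0<1<y_0\), which already pins down \(\lambda,\mu\), so the cases \(w\in\{x_0,y_0\}\) need no separate treatment, while the paper adds a fourth auxiliary point to show that any three points on the graph of \(g\) are collinear.
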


\begin{proof}
The sufficiency follows from
\[
        E_P[-\log(Lz)]=E_P[-\log z]-E_P[\log L],
        \qquad z=\frac{\dd Q}{\dd P},
\]
and from the fact that, on \(\mathcal C_L\), the affine terms \(\beta x+\gamma\) contribute only constants because \(E_P[z]=E_P[Lz]=1\).

For necessity, put
\[
        g(x):=xf'(x).
\]
Fix a finite model and a positive \(L\) with \(E_P[L]=1\). Since \(f\) is convex, Jensen's inequality gives
\[
        I_f(Q\Vert P)\ge f(E_P[\dd Q/\dd P])=f(1),
\]
so the point \(Q=P\), i.e. \(z_i\equiv1\), minimizes \(I_f(Q\Vert P)\) over every feasible family containing it. By the assumed invariance property, the same pulled-back point must minimize
\[
        Q\mapsto I_f(T_LQ\Vert P)
        =\sum_i p_i f(L_iz_i)
\]
over \(\mathcal C_L\). The first-order condition at \(z_i\equiv1\), for variations \(h=(h_i)\) satisfying
\[
        \sum_i p_ih_i=0,
        \qquad
        \sum_i p_iL_ih_i=0,
\]
is
\[
        \sum_i p_i L_if'(L_i)h_i=0.
\]
Equivalently, the vector \(g(L_i)=L_if'(L_i)\) lies in the span of the two constraint vectors \(1\) and \(L_i\). Thus, for every finite list of positive numbers \(L_i\) for which \(1\) lies in the interior of their convex hull, the points
\[
        (L_i,g(L_i))
\]
lie on a single affine line.

This forces \(g\) itself to be affine on \((0,\infty)\). Indeed, given any three distinct positive numbers \(x,y,z\), choose a fourth positive number \(r\) so that \(1\) lies in the interior of the convex hull of \(\{x,y,z,r\}\). Then there exist strictly positive weights \(p_i\) whose weighted average of \(x,y,z,r\) is \(1\). The preceding paragraph implies that the four points \((x,g(x)),(y,g(y)),(z,g(z)),(r,g(r))\) are collinear; hence any three distinct points of the graph of \(g\) are collinear. Therefore
\[
        g(x)=A+Bx
\]
for constants \(A,B\). Since \(g(x)=xf'(x)\),
\[
        f'(x)=\frac{A}{x}+B,
\]
and integration gives
\[
        f(x)=A\log x+Bx+C.
\]
Convexity gives \(f''(x)=-A/x^2\ge0\), hence \(A\le0\). The non-affine case has \(A<0\), which is, up to affine equivalence and positive scaling, the integrand \(-\log x\).\end{proof}

\begin{remark}[Role of the class-level quantifier]
The quantifier in Theorem~\ref{thm:f_divergence_uniqueness} is essential. For one fixed finite vector \(L\), the first-order condition constrains \(xf'(x)\) only at the finitely many values taken by \(L\). The conclusion that \(xf'(x)\) is affine on all of \((0,\infty)\) follows from requiring the same pulled-back invariance property for the whole class of elementary one-period likelihood-ratio families. Thus the logarithmic divergence is forced before multi-period or continuous-time structure enters, but not by a single finite family alone.
\end{remark}

\begin{remark}
Theorem~\ref{thm:f_divergence_uniqueness} is deliberately finite-state and smooth. It gives a direct divergence-level version of the classical numeraire-invariance property of logarithmic utility. The necessity proof uses only the subclass of models in which the physical law \(P\) is itself feasible as a martingale measure; thus nonlogarithmic \(f\)-divergences already fail invariance before any nontrivial market price of risk is introduced. Within the Goll--R\"uschendorf minimal \(f\)-divergence framework, imposing numeraire invariance singles out the logarithmic case.
\end{remark}

\section{The finite-state forward projection and log-optimal wealth}\label{sec:forward_log}

We now specialize to a one-period finite-state market. In this section and the finite-state sections that follow, \(q_i\) denotes the mass \(Q(\{\omega_i\})\). Let \(\Omega=\{\omega_1,\ldots,\omega_m\}\), let \(P=(p_i)_{i=1}^m\) satisfy \(p_i>0\), and let the reference numeraire have return one. There are \(n\) further traded assets with strictly positive gross returns
\[
        R^j_i:=R^j(\omega_i),
        \qquad j=1,\ldots,n,
        \quad i=1,\ldots,m.
\]
A martingale measure \(Q=(q_i)\) satisfies
\begin{equation}\label{eq:finite_martingale_constraints}
        \sum_{i=1}^m q_i=1,
        \qquad
        \sum_{i=1}^m q_i R_i^j=1,
        \quad j=1,\ldots,n.
\end{equation}
Let \(\overline{\cM}\) denote the closed feasible polytope of nonnegative solutions of \eqref{eq:finite_martingale_constraints}, and let \(\cM\) denote its strictly positive part. The projection below is a finite-dimensional instance of Csisz\'ar's \(I\)-projection onto a linearly constrained family \citep{Csiszar1975}. For portfolio cost weights \(w=(w^0,w^1,\ldots,w^n)\) satisfying \(w^0+\sum_{j=1}^n w^j=1\), the terminal gross return is
\begin{equation}\label{eq:portfolio_return}
        R^w_i
        =w^0+\sum_{j=1}^n w^jR_i^j.
\end{equation}
We allow unconstrained long-short weights but require \(R^w_i>0\) in every state. For every \(Q\in\cM\) and every such portfolio, \(E_Q[R^w]=1\), so \(R^wQ\) is a probability measure and
\begin{equation}\label{eq:log_growth_bound}
 \KL(P\Vert Q)-E_P[\log R^w]
 =\KL(P\Vert R^wQ)\ge0.
\end{equation}
Consequently,
\begin{equation}\label{eq:log_weak_duality}
 \sup_{w:\,R^w>0}E_P[\log R^w]
 \le
 \inf_{Q\in\cM}\KL(P\Vert Q).
\end{equation}
Thus every martingale measure supplies an upper bound on achievable expected log growth, and the forward projection supplies the smallest such bound. The result below shows that this bound is attained in the finite-state interior case.

\begin{lemma}[Existence and interiority of the finite-state forward projection]\label{lem:finite_existence}
Assume \(\cM\neq\varnothing\). Then \(Q\mapsto\KL(P\Vert Q)\) attains its minimum over \(\overline{\cM}\). Every minimizer belongs to \(\cM\), and the minimizer is unique.
\end{lemma}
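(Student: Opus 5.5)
Since \(P\) is a fixed strictly positive law on a finite set, the objective on \(\overline{\cM}\) is
\[
\KL(P\Vert Q)=\sum_{i=1}^m p_i\log p_i-\sum_{i=1}^m p_i\log q_i ,
\]
with the value \(+\infty\) whenever some \(q_i=0\). Minimizing it is therefore equivalent to maximizing \(\Phi(Q):=\sum_i p_i\log q_i\) over the compact polytope \(\overline{\cM}\). I will handle existence, interiority and uniqueness in that order.

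\emph{Existence.} The function \(\Phi:\overline{\cM}\to[-\infty,0]\) is upper semicontinuous, because each \(\log q_i\) is continuous into \([-\infty,\infty)\) and the weights \(p_i\) are positive. The set \(\overline{\cM}\) is compact, being closed and contained in the simplex, and it is nonempty. Hence \(\Phi\) attains its supremum. Since \(\cM\neq\varnothing\), some feasible point has \(\Phi>-\infty\). The maximum is therefore finite, and every maximizer lies in \(\cM\), since any boundary point has \(\Phi=-\infty\). This settles both existence and interiority. The key point is only that \(p_i>0\) for all \(i\), so a zero \(q_i\) forces infinite divergence. This is cheaper than a first-order boundary argument; I could still record such an argument, moving from a boundary point toward an interior feasible point \(Q^0\) and checking that the derivative of \(\Phi\) is \(+\infty\), but it is not needed.

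\emph{Uniqueness.} The map \(q\mapsto\sum_i p_i\log q_i\) is strictly concave on the open orthant, again because every \(p_i>0\). The set \(\cM\) is convex, as the intersection of an affine subspace with the open orthant. Suppose \(Q^1\neq Q^2\) are both maximizers. Their midpoint lies in \(\cM\) and gives a strictly larger value of \(\Phi\), which is a contradiction. This matches the strict-convexity remark made after Proposition~\ref{prop:pricing_consistency}.

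I do not expect a real obstacle here. The only care needed is the extended-real bookkeeping: the objective is \(+\infty\) on the boundary of \(\overline{\cM}\), so "attains its minimum over \(\overline{\cM}\)" must be read as minimizing a lower semicontinuous extended-real function on a compact set. With that reading, the nonemptiness of \(\cM\) is exactly what rules out the degenerate case in which the minimum value is \(+\infty\).
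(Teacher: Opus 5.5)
Your proof is correct and follows the paper's argument: semicontinuity of the extended objective on the compact polytope \(\overline{\cM}\) gives existence, the objective is infinite at every boundary point because all \(p_i>0\) (which gives interiority), and strict concavity of \(\sum_i p_i\log q_i\) on the convex set \(\cM\) gives uniqueness. Your only change is to maximize \(\Phi\) instead of minimizing \(F\), which is purely cosmetic.
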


\begin{proof}
Extend
\[
        F(q)=\sum_{i=1}^m p_i\log\frac{p_i}{q_i}
\]
to \(\overline{\cM}\) by setting \(F(q)=+\infty\) if some \(q_i=0\). The set \(\overline{\cM}\) is a compact polytope, and the assumption \(\cM\neq\varnothing\) gives at least one feasible point with finite objective value. The extended function \(F\) is lower semicontinuous on \(\overline{\cM}\), so it attains its minimum. Since \(p_i>0\) for all \(i\), every boundary point with some \(q_i=0\) has infinite objective value; hence every minimizer is strictly positive. Strict convexity of \(q\mapsto -\sum_i p_i\log q_i\) on the positive orthant gives uniqueness on the convex feasible set.\end{proof}

\begin{theorem}[Forward projection density is log-optimal wealth]\label{thm:forward_log_wealth}
Assume \(\cM\neq\varnothing\), and let \(Q^*\) be the unique minimizer of \(\KL(P\Vert Q)\) over \(\overline{\cM}\). Then there exists an admissible portfolio \(w^*\) such that
\begin{equation}\label{eq:density_attainable}
        R_i^{w^*}=\frac{p_i}{q_i^*},
        \qquad i=1,\ldots,m.
\end{equation}
Consequently,
\[
        R^{w^*}Q^*=P,
        \qquad
        \KL(P\Vert R^{w^*}Q^*)=0,
\]
and \(w^*\) is growth optimal. Moreover, the primal and dual values coincide:
\begin{equation}\label{eq:log_primal_dual_equality}
 \max_{w:\,R^w>0}E_P[\log R^w]
 =\min_{Q\in\overline{\cM}}\KL(P\Vert Q)
 =\KL(P\Vert Q^*).
\end{equation}
\end{theorem}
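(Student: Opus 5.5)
The plan is to use Lagrangian first-order conditions at the interior minimizer supplied by Lemma~\ref{lem:finite_existence}, and then read the resulting multiplier vector as portfolio weights. Since \(Q^*\in\cM\) is strictly positive, it is a local minimizer of the smooth strictly convex function \(F(q)=\sum_i p_i\log(p_i/q_i)\) on the relatively open set \(\cM\). That set is an open subset of the affine space cut out by the linear constraints \eqref{eq:finite_martingale_constraints}. The gradient of \(F\) at \(q^*\) is \(\bigl(-p_i/q_i^*\bigr)_i\), so this gradient must be orthogonal to every direction \(h\) satisfying \(\sum_i h_i=0\) and \(\sum_i h_iR_i^j=0\) for all \(j\). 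By linear algebra (orthogonal complement of the kernel equals the row space), the vector \(p_i/q_i^*\) then lies in the span of the constraint vectors:
\[
 \frac{p_i}{q_i^*}=\lambda^0+\sum_{j=1}^n\lambda^jR_i^j,\qquad i=1,\ldots,m,
\]
for some real multipliers \(\lambda^0,\ldots,\lambda^n\). Redundant constraints cause no difficulty here, because the multipliers need not be unique.

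The next step is to show that \(\lambda\) is a valid cost-weight vector. Multiply the displayed identity by \(q_i^*\) and sum over \(i\). Using \(\sum_i q_i^*=1\) and \(\sum_i q_i^*R_i^j=1\) gives \(1=\sum_ip_i=\lambda^0+\sum_j\lambda^j\). So \(w^*:=\lambda\) satisfies the budget condition \(w^0+\sum_jw^j=1\), and \(R^{w^*}_i=p_i/q_i^*>0\) in every state. Hence \(w^*\) is admissible and \eqref{eq:density_attainable} holds. It follows immediately that \(R^{w^*}_iq_i^*=p_i\), i.e.\ \(R^{w^*}Q^*=P\), and therefore \(\KL(P\Vert R^{w^*}Q^*)=0\).

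For the duality statement, I would apply \eqref{eq:log_growth_bound} with \(Q=Q^*\) and \(w=w^*\), which gives \(E_P[\log R^{w^*}]=\KL(P\Vert Q^*)\). Combining this with the weak duality \eqref{eq:log_weak_duality} yields the chain \(E_P[\log R^{w^*}]\le\sup_w E_P[\log R^w]\le\inf_{\cM}\KL(P\Vert Q)=\KL(P\Vert Q^*)\), and the two ends are equal. So the supremum is attained at \(w^*\), which is therefore growth optimal. The infimum over \(\cM\) equals the minimum over \(\overline{\cM}\), because \(F=+\infty\) on the boundary. This gives \eqref{eq:log_primal_dual_equality}.

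The main obstacle is justifying the Lagrange condition rigorously. Interiority from Lemma~\ref{lem:finite_existence} is exactly what removes any inequality (positivity) multipliers, so the only care needed is to state the finite-dimensional fact that a differentiable function minimized on an open subset of an affine space has gradient in the row space of the constraint matrix. I would prove this directly by a one-sided directional-derivative argument along \(\pm h\) for \(h\) in the kernel, which is legitimate because \(q^*\pm th\) remains strictly positive for small \(t>0\).
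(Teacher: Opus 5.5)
Your proposal is correct and follows essentially the same route as the paper. The interior first-order condition puts \(p_i/q_i^*\) in the span of the constraint vectors, taking the \(Q^*\)-expectation shows the multipliers are budget-feasible weights, and the identity \(\KL(P\Vert R^wQ^*)=\KL(P\Vert Q^*)-E_P[\log R^w]\ge0\), which you invoke through the weak-duality chain, gives growth optimality and equality of the primal and dual values. Your kernel/row-space argument along \(\pm h\) justifies the Lagrange condition a bit more carefully than the paper's ``after deleting redundant constraints,'' but it is the same step.
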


\begin{proof}
By Lemma~\ref{lem:finite_existence}, \(Q^*\) is strictly positive, so only the equality constraints in \eqref{eq:finite_martingale_constraints} are active. The first-order condition for minimizing
\[
        \sum_{i=1}^m p_i\log\frac{p_i}{q_i}
\]
subject to \eqref{eq:finite_martingale_constraints} says that the gradient vector lies in the span of the constraint vectors. Equivalently, after deleting redundant constraints if necessary, there exist constants \(\lambda_0,\lambda_1,\ldots,\lambda_n\) such that
\[
        -\frac{p_i}{q_i^*}
        +\lambda_0+\sum_{j=1}^n\lambda_jR_i^j=0,
        \qquad i=1,\ldots,m.
\]
Thus
\[
        \frac{p_i}{q_i^*}
        =\lambda_0+\sum_{j=1}^n\lambda_jR_i^j.
\]
Taking expectation under \(Q^*\) and using the martingale constraints gives
\[
        1=\sum_i q_i^*\frac{p_i}{q_i^*}
        =\lambda_0+\sum_{j=1}^n\lambda_j.
\]
Therefore \(w^{0,*}=\lambda_0\) and \(w^{j,*}=\lambda_j\) are cost weights summing to one, and \eqref{eq:density_attainable} holds. Positivity follows from \(p_i/q_i^*>0\).

For any admissible \(w\), the measure with masses \(R_i^wq_i^*\) is a probability measure, because \(E_{Q^*}[R^w]=1\), and
\[
        \KL(P\Vert R^wQ^*)
        =
        \KL(P\Vert Q^*)-E_P[\log R^w].
\]
The left-hand side is nonnegative for every \(w\) and equals zero at \(w=w^*\). Therefore
\[
 E_P[\log R^w]\le \KL(P\Vert Q^*)
 =E_P[\log R^{w^*}],
\]
which proves growth optimality and \eqref{eq:log_primal_dual_equality}.\end{proof}

\begin{remark}[The sharpest log-growth bound]
Equation~\eqref{eq:log_growth_bound} gives a useful interpretation of the dual problem. Each martingale measure \(Q\) imposes the bound \(E_P\log R^w\le\KL(P\Vert Q)\) on every positive attainable return. Minimizing the forward divergence chooses the sharpest of these bounds, and Theorem~\ref{thm:forward_log_wealth} shows that its first-order condition makes the bound exact. Equivalently, the dual density is the reciprocal of optimal wealth, or \(dP/dQ^*\) is the log-optimal payoff; see \citet{GollKallsen2000,GollKallsen2003}.
\end{remark}

\section{Reverse entropy does not commute with numeraire change}\label{sec:reverse}

In this section, \(q_i\) denotes the mass \(Q(\{\omega_i\})\).

The MEMM orientation behaves differently because the expectation is taken under the measure being optimized.  If \(\widetilde Q=T_LQ\), then
\begin{align}\label{eq:reverse_difference_general}
        \KL(\widetilde Q\Vert P)-\KL(Q\Vert P)
        &= E_Q\!\left[L\log\left(L\frac{\dd Q}{\dd P}\right)\right]
           -E_Q\!\left[\log\frac{\dd Q}{\dd P}\right]  
           \notag\\
        &= E_Q\!\left[(L-1)\log\frac{\dd Q}{\dd P}\right]+E_Q[L\log L].
\end{align}
The right-hand side generally depends on \(Q\).  A nonconstant perturbation alone does not prove that minimizers fail to correspond, so we give an explicit counterexample.

\begin{theorem}[Reverse entropy is not numeraire invariant]\label{thm:reverse_counterexample}
There exists a one-period trinomial market and a physical law \(P\) such that the reverse-entropy minimizer in one numeraire does not map, under the change-of-numeraire likelihood transform, to the reverse-entropy minimizer in the other numeraire.
\end{theorem}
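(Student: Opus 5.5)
The plan is to give a fully explicit one-period trinomial market in which both reverse-entropy problems reduce to minimizing a strictly convex function of a single parameter. I then compare the two first-order conditions directly. Here "reverse-entropy minimizer" means, as in Section~\ref{sec:entropy_transforms}, the minimizer of \(\KL(Q\Vert P)\) over the martingale-measure set of the given numeraire. In the \(N\)-numeraire this means minimizing \(\KL(Q\Vert P)\) over \(\cM^N\). In the \(X\)-numeraire it means minimizing \(\KL(\widetilde Q\Vert P)\) over \(\cM^X=T_L(\cM^N)\). Pulling the second problem back through the bijection \(T_L\) turns it into minimizing \(Q\mapsto\KL(T_LQ\Vert P)\) over \(\cM^N\). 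Invariance would therefore require one point of \(\cM^N\) to minimize both objectives, and I will show that no such point exists.

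\textbf{Market.} Take \(N\equiv1\) and a single risky asset \(X\) with \(X_0=1\) and gross returns \(R=(2,1,\tfrac12)\) on \(\{\omega_1,\omega_2,\omega_3\}\). Then \(L=R\). The martingale constraints \(\sum q_i=1\) and \(2q_1+q_2+\tfrac12q_3=1\) give the one-parameter family
\[
 q(t)=(t,\;1-3t,\;2t),\qquad 0<t<\tfrac13 .
\]
Every feasible variation is a multiple of \(h=(1,-3,2)\), and this direction satisfies \(\sum_i h_i=\sum_i h_iR_i=0\).

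\textbf{The two first-order conditions.} Both objectives are strictly convex in \(t\) and tend to finite boundary values with infinite slope at the endpoints of the interval. Hence each has a unique interior minimizer, characterized by its first-order condition.

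For \(\KL(Q\Vert P)=\sum_i q_i\log(q_i/p_i)\), the condition \(\sum_i h_i\log(q_i/p_i)=0\) becomes
\[
 \frac{t\,(2t)^2}{p_1p_3^2}=\frac{(1-3t)^3}{p_2^3}.
\]

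For \(\KL(T_LQ\Vert P)=\sum_i R_iq_i\log(R_iq_i/p_i)\), the condition is \(\sum_i h_iR_i\bigl(\log(R_iq_i/p_i)+1\bigr)=0\). The constant term drops out because \(\sum_i h_iR_i=0\). Using \(R_1q_1=2t\) and \(R_3q_3=t\), the condition becomes
\[
 \frac{(2t)^2\,t}{p_1^2p_3}=\frac{(1-3t)^3}{p_2^3}.
\]

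The two conditions coincide exactly when \(p_1=p_3\). This is why the symmetric choice \(P\) uniform must be avoided: it makes the two minimizers agree.

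\textbf{Choice of \(P\) and conclusion.} Choose \(P=(\tfrac12,\tfrac14,\tfrac14)\). The \(X\)-condition becomes \(64t^3=64(1-3t)^3\), so \(t_X=\tfrac14\). The \(N\)-condition becomes \(2t^3=(1-3t)^3\), so \(t_N=1/(3+2^{1/3})\neq\tfrac14\); indeed at \(t=\tfrac14\) the two sides are \(2/64\) and \(1/64\). Hence \(T_L\) maps the \(N\)-numeraire minimizer \(q(t_N)\) to a point of \(\cM^X\) different from the \(X\)-numeraire minimizer \(T_Lq(t_X)\), which is the claim of the theorem.

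The main obstacle is the step where the market and \(P\) must be chosen so that the two first-order conditions genuinely differ. Symmetric configurations give equal minimizers, and that coincidence has to be ruled out by an explicit asymmetric computation; everything else is routine one-dimensional convex calculus.
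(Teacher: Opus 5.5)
Your proof is correct and takes essentially the paper's approach: an explicit trinomial market whose martingale measures form a strictly convex one-parameter problem, where the two reverse-entropy first-order conditions give distinct minimizers. The differences are cosmetic. The paper uses $P=(\frac12,\frac16,\frac13)$ and $R=(\frac32,1,\frac12)$, parameterizes both families by the middle-state mass, and verifies $\cM^X=T_R(\cM^N)$ explicitly; you pull the $X$-problem back through $T_L$ and add the observation that the two conditions coincide iff $p_1=p_3$, while the identification $\cM^X=T_L(\cM^N)$ you assert is immediate here because $E_{RQ}[1/R]=E_Q[1]=1$ and $E_Q[R]=1$.
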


\begin{proof}
Let \(\Omega=\{\omega_1,\omega_2,\omega_3\}\), fix
\[
        P=\left(\frac12,\frac16,\frac13\right),
        \qquad
        R=\left(\frac32,1,\frac12\right),
\]
and interpret \(R\) as the gross return of asset \(X\) in units of asset \(Y\).  The \(Y\)-numeraire martingale measures are
\begin{equation}\label{eq:Y_family}
        Q^Y(a)=\left(\frac{1-a}{2},\,a,\,\frac{1-a}{2}\right),
        \qquad 0<a<1,
\end{equation}
which satisfy \(E_{Q^Y(a)}[R]=1\).  The change to the \(X\)-numeraire multiplies by \(R\):
\begin{equation}\label{eq:X_family}
        Q^X(a)=RQ^Y(a)
        =\left(\frac{3(1-a)}{4},\,a,\,\frac{1-a}{4}\right).
\end{equation}
These are exactly the martingale measures for the reciprocal return \(1/R=(2/3,1,2)\) in \(X\)-units.  Indeed, any measure in the same one-parameter form can be written as
\[
        \bigl((1-a)b,\,a,\,(1-a)c\bigr),
        \qquad b+c=1.
\]
The \(X\)-numeraire martingale condition is
\[
        (1-a)\left(\frac23 b+2c\right)+a=1,
\]
which, for \(a<1\), is equivalent to \((2/3)b+2c=1\).  Together with \(b+c=1\), this forces \(b=3/4\) and \(c=1/4\), giving \eqref{eq:X_family}.

First minimize \(\KL(Q^Y(a)\Vert P)\).  More generally, for
\[
        Q^{b,c}(a)=\bigl((1-a)b,\,a,\,(1-a)c\bigr),
        \qquad b+c=1,
\]
one has
\[
        \KL(Q^{b,c}(a)\Vert P)
        =a\log\frac{a}{p_2}
        +(1-a)b\log\frac{(1-a)b}{p_1}
        +(1-a)c\log\frac{(1-a)c}{p_3}.
\]
Moreover,
\[
        \frac{d^2}{da^2}\KL(Q^{b,c}(a)\Vert P)
        =\frac1a+\frac1{1-a}>0,
\]
so the objective is strictly convex and the first-order condition gives the unique minimizer:
\begin{equation}\label{eq:reverse_min_general}
        \frac{a}{1-a}
        =p_2\frac{b^bc^c}{p_1^bp_3^c}.
\end{equation}
For the \(Y\)-numeraire family, \(b=c=1/2\).  Therefore
\[
        a_Y^\dagger
        =\argmin_a\KL(Q^Y(a)\Vert P)
        =\frac{\sqrt6}{12+\sqrt6}
        \approx 0.169521.
\]
For the \(X\)-numeraire family, \(b=3/4\) and \(c=1/4\).  Therefore
\[
        a_X^\dagger
        =\argmin_a\KL(Q^X(a)\Vert P)
        =\frac{2^{3/4}}{8+2^{3/4}}
        \approx 0.173707.
\]
Since \(a_X^\dagger\neq a_Y^\dagger\),
\[
        T_RQ^Y(a_Y^\dagger)=Q^X(a_Y^\dagger)
        \neq
        Q^X(a_X^\dagger).
\]
Thus the reverse-entropy projection computed in \(Y\)-units is not carried by the likelihood transform to the reverse-entropy projection computed directly in \(X\)-units.\end{proof}

\begin{corollary}[A digital-option price disagreement between the two MEMMs]\label{cor:memm_price_disagreement}
Let
\[
 Q^{Y,\dagger}:=Q^Y(a_Y^\dagger),
 \qquad
 Q^{X,\dagger}:=Q^X(a_X^\dagger),
\]
be the MEMMs selected independently in the two numeraires, and consider the digital option
\[
 H_T:=Y_T\one_{\{\omega_2\}},
\]
which pays one unit of the \(Y\)-numeraire in the middle state. The two selected pricing rules give
\begin{align}
 \pi_Y^\dagger(H)
 &=Y_0E_{Q^{Y,\dagger}}\!\left[\frac{H_T}{Y_T}\right]
 =Y_0a_Y^\dagger,\label{eq:Y_memm_digital_price}\\
 \pi_X^\dagger(H)
 &=X_0E_{Q^{X,\dagger}}\!\left[\frac{H_T}{X_T}\right]
 =Y_0a_X^\dagger.\label{eq:X_memm_digital_price}
\end{align}
Since \(a_Y^\dagger\ne a_X^\dagger\), the two agents assign different prices to the same claim.
\end{corollary}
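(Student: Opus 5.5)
The plan is to compute both prices directly from the explicit families \eqref{eq:Y_family} and \eqref{eq:X_family}, and then invoke the inequality $a_X^\dagger\neq a_Y^\dagger$ established in Theorem~\ref{thm:reverse_counterexample}. Work in the trinomial market of that theorem, with $X_T/Y_T=(X_0/Y_0)R$, so that $R$ is the gross return of $X$ in $Y$-units.

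For the $Y$-price, the claim $H_T=Y_T\one_{\{\omega_2\}}$ gives $H_T/Y_T=\one_{\{\omega_2\}}$. Hence $\pi_Y^\dagger(H)=Y_0\,Q^{Y,\dagger}(\{\omega_2\})$. By \eqref{eq:Y_family} the middle mass of $Q^Y(a)$ is $a$, so this price equals $Y_0a_Y^\dagger$, which is \eqref{eq:Y_memm_digital_price}.

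For the $X$-price, I would write
\[
 \frac{H_T}{X_T}=\frac{Y_T}{X_T}\one_{\{\omega_2\}}=\frac{Y_0}{X_0}\frac{1}{R}\one_{\{\omega_2\}}.
\]
Since $R(\omega_2)=1$, this reduces to $(Y_0/X_0)\one_{\{\omega_2\}}$. Therefore $\pi_X^\dagger(H)=Y_0\,Q^{X,\dagger}(\{\omega_2\})$. By \eqref{eq:X_family} the middle mass of $Q^X(a)$ is also $a$, so the price is $Y_0a_X^\dagger$, which is \eqref{eq:X_memm_digital_price}.

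Theorem~\ref{thm:reverse_counterexample} gives the explicit values $a_Y^\dagger=\sqrt6/(12+\sqrt6)\approx0.1695$ and $a_X^\dagger=2^{3/4}/(8+2^{3/4})\approx0.1737$. These differ, so with $Y_0>0$ the two prices differ.

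As a consistency check, this agrees with Proposition~\ref{prop:pricing_consistency} applied with $N=Y$ and $G=\one_{\{\omega_2\}}$. It is also worth noting that $H$ is not replicable. Replication would require $\one_{\{\omega_2\}}$ to lie in the span of $(1,1,1)$ and $R=(3/2,1,1/2)$, and every element of that span is affine in $R$. An affine function of $R$ cannot vanish at $R=3/2$ and $R=1/2$ while equalling $1$ at $R=1$.

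There is no real obstacle here. The only point needing care is the normalization $X_T/Y_T=(X_0/Y_0)R$ together with $R(\omega_2)=1$, which is what makes the factor $X_0$ cancel in the $X$-price.
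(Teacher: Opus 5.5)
Your proposal is correct and follows the paper's proof essentially line by line: the $Y$-price is read off from the middle mass $a_Y^\dagger$, the $X$-price uses $Y_T/X_T=(Y_0/X_0)/R$ with $R(\omega_2)=1$ to reduce to $Y_0a_X^\dagger$, and the prices differ because $a_Y^\dagger\neq a_X^\dagger$ and $Y_0>0$. Nonreplicability holds because $(0,1,0)$ is not in the span of $\one$ and $R$, and your affine-in-$R$ argument is just a more explicit version of the paper's span remark.
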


\begin{proof}
The first equality follows from \(Q^{Y,\dagger}(\omega_2)=a_Y^\dagger\). Since
\[
 \frac{Y_T}{X_T}=\frac{Y_0}{X_0}\frac1R
\]
and \(R(\omega_2)=1\), the \(X\)-based price is
\[
 X_0E_{Q^{X,\dagger}}\!\left[\frac{Y_T}{X_T}\one_{\{\omega_2\}}\right]
 =Y_0E_{Q^{X,\dagger}}\!\left[\frac1R\one_{\{\omega_2\}}\right]
 =Y_0a_X^\dagger.
\]
The payoff vector \((0,1,0)\) is not in the span of the traded payoff vectors \(\one\) and \(R\), so the claim is nonreplicable.
\end{proof}

\begin{remark}
The discrepancy in Theorem~\ref{thm:reverse_counterexample} is structural, not numerical. Equation~\eqref{eq:reverse_difference_general} shows that a nontrivial likelihood transform changes the reverse objective by a candidate-dependent term. Proposition~\ref{prop:pricing_consistency} then turns the moved argmin into a pricing consequence: the two MEMMs agree on the traded assets and on every replicable claim, but they need not define the same completion price for a nonreplicable claim. In contrast, the forward selectors are likelihood compatible and therefore generate the same pricing functional in both units.
\end{remark}

\section{A trinomial illustration of exact forward completion}\label{sec:trinomial}

The generic one-parameter family used in the preceding proof also makes the forward calculation transparent. For
\[
 Q^{b,c}(a)=\bigl((1-a)b,\,a,\,(1-a)c\bigr),
 \qquad 0<a<1,\quad b,c>0,\quad b+c=1,
\]
the forward divergence has the form
\begin{equation}\label{eq:generic_forward_trinomial}
 \KL(P\Vert Q^{b,c}(a))
 =C_{b,c}(P)-(1-p_2)\log(1-a)-p_2\log a,
\end{equation}
where \(C_{b,c}(P)\) does not depend on \(a\). Its unique minimizer is therefore
\begin{equation}\label{eq:generic_forward_a}
 a^*=p_2,
\end{equation}
independently of \(b\) and \(c\). In the trinomial transformation used above, the change of numeraire alters the conditional allocation \((b,c)\) of the remaining mass but not the optimal coordinate \(a^*\). This is the coordinate-level version of the general covariance result in that family.

For the numerical model of Theorem~\ref{thm:reverse_counterexample},
\[
        R=\left(\frac32,1,\frac12\right),
        \qquad
        P=\left(\frac12,\frac16,\frac13\right),
\]
the \(Y\)-numeraire family is \eqref{eq:Y_family}. Equation~\eqref{eq:generic_forward_a} gives
\[
        a^*=p_2=\frac16,
        \qquad
        Q^{Y,*}=\left(\frac5{12},\frac16,\frac5{12}\right).
\]
The corresponding \(X\)-numeraire measure is
\[
        Q^{X,*}=RQ^{Y,*}
        =\left(\frac58,\frac16,\frac5{24}\right),
\]
and, by Theorem~\ref{thm:forward_invariance}, it is also the forward projection in \(X\)-units.

A portfolio investing fraction \(w\) in \(X\) and \(1-w\) in \(Y\) has gross return
\[
        R^w=1+w(R-1)
        =\left(1+\frac w2,\,1,\,1-\frac w2\right).
\]
The expected log-growth objective is
\[
        g(w)=\frac12\log\left(1+\frac w2\right)
              +\frac13\log\left(1-\frac w2\right),
        \qquad -2<w<2,
\]
whose maximizer is \(w^*=2/5\).  Hence
\[
        R^{w^*}=\left(\frac65,1,\frac45\right).
\]
On the other hand,
\[
        \frac{P}{Q^{Y,*}}
        =\left(\frac65,1,\frac45\right)
        =R^{w^*}.
\]
Therefore
\[
        R^{w^*}Q^{Y,*}=P,
        \qquad
        \KL(P\Vert R^{w^*}Q^{Y,*})=0.
\]
This zero residual does not mean the market is complete. As Lemma~\ref{lem:finite_existence} and Theorem~\ref{thm:forward_log_wealth} show in the finite-state interior case, it means that the selected forward-entropy state-price completion is the one under which the log-optimal belief payoff \(dP/dQ^{Y,*}\) lies in the span of the traded assets. By Proposition~\ref{prop:pricing_consistency}, the \(Y\)- and \(X\)-based forward selectors agree on the value of every integrable claim. In particular, for the digital claim in Corollary~\ref{cor:memm_price_disagreement}, both give the price
\[
 Y_0Q^{Y,*}(\omega_2)=\frac{Y_0}{6}.
\]

\section{Scope beyond finite states}\label{sec:scope}

Theorem~\ref{thm:forward_invariance} is not a finite-state artifact. It is the elementary identity
\[
        \KL(P\Vert LQ)=\KL(P\Vert Q)-E_P\log L,
\]
and therefore applies on any filtered probability space whenever the likelihood transform \(Q\mapsto LQ\) is defined and maps the admissible class for one numeraire bijectively to the admissible class for another. In semimartingale models, this is precisely the role of the change-of-numeraire theorem under its usual true-martingale hypotheses.

The finite-state restriction enters in Lemma~\ref{lem:finite_existence} and Theorem~\ref{thm:forward_log_wealth}, where existence of an equivalent minimizer and a classical Lagrange multiplier argument are automatic under the strictly positive feasible-point condition. In continuous-time incomplete markets the forward minimizer over equivalent martingale measures may fail to exist as an equivalent martingale measure. The logarithmic utility dual optimizer is naturally formulated over a larger domain of supermartingale deflators or optional measures. This is one reason the MEMM remains especially useful: under suitable finite-entropy assumptions it yields an equivalent martingale measure and is tied to exponential utility.

Thus the results here should be read as follows. On any martingale-measure domain stable under change of numeraire, the forward entropy projection is numeraire invariant and its selected pricing functional is independent of the accounting representation. In finite-state markets, the projection exists under a transparent interiority condition, its likelihood ratio is exactly the log-optimal payoff, and its divergence is the sharp upper bound on expected log growth. In more general semimartingale models, the same constant-shift mechanism points to a formulation on paired deflator domains rather than only on equivalent martingale measures. The present paper identifies the identity and pricing covariance that such an extension would have to preserve.

\section{Conclusion}\label{sec:conclusion}

The two relative-entropy orientations solve different optimization problems. The convex-conjugate calculation associates \(\KL(P\Vert Q)\) with logarithmic utility and \(\KL(Q\Vert P)\) with exponential utility. On the primal side, logarithmic utility changes by an additive portfolio-independent term under a change of numeraire, while exponential utility specified in a fixed accounting unit does not. There is therefore no unconditional ranking of the two measures: each is natural for its corresponding optimization problem.

For a martingale-measure selector intended to provide a canonical completion of an incomplete market, however, numeraire covariance is economically substantive. It is equivalent to obtaining the same claim-pricing functional in every numeraire. The forward projection has this property, so agents computing it in \(N\)- and \(X\)-units obtain likelihood-compatible measures and agree on every integrable claim. The MEMM need not: in the trinomial example its two independently computed versions are incompatible and assign different prices to a nonreplicable digital claim, although they still agree on all traded and replicable payoffs.

Finally, the logarithmic orientation is not merely one convenient invariant example. Within smooth \(f\)-divergence projections, requiring the invariance criterion uniformly over the elementary finite one-period likelihood-ratio families singles it out, up to affine equivalence and positive scaling. Together with the exact log-growth duality, this identifies the forward entropy projection as the natural choice when representation-invariant completion, rather than fixed-unit exponential-utility valuation, is the objective.

\section*{Statements and Declarations}

\noindent\textbf{Funding.} This work was partially supported by the Grant Agency of the Czech Republic under grant 24-11146S.

\noindent\textbf{Competing interests.} The author declares no competing interests.

\noindent\textbf{Data availability.} Data sharing is not applicable to this article as no datasets were generated or analyzed.

\noindent\textbf{Code availability.} Not applicable.

\noindent\textbf{Use of AI tools.} The author used ChatGPT 5.6 for language editing, organization, and suggestions concerning mathematical exposition.

\end{document}